\theoremstyle{plain}
\newtheorem{theorem}{Theorem}[section] \newtheorem{proposition}{Proposition}[section] 
\newtheorem{definition}{Definition}
\newtheorem{example}{Example}
\newtheorem{algorithm}{Algorithm}
\numberwithin{equation}{section}
\title{Using Aggregate Relational Data to Infer Social Networks
\thanks{I am grateful to Hiroaki Kaido, Ivan Fernandez-Val, and Jean-Jacques Forneron for their guidance and support. I also thank participants of Boston University econometrics workshop and BU-BC Greenline workshop in Econometrics for their helpful comments.}
}
\author{Xunkang Tian\footnote{European Research University. Email: \href{mailto:xunkang.tian@eruni.org}{xunkang.tian@eruni.org}}}
\date{\today}
\begin{document}

\maketitle

\begin{abstract}
This study introduces a novel approach for inferring social network structures using Aggregate Relational Data (ARD), addressing the challenge of limited detailed network data availability. By integrating ARD with variational approximation methods, we provide a computationally efficient and cost-effective solution for network analysis. Our methodology demonstrates the potential of ARD to offer insightful approximations of network dynamics, as evidenced by Monte Carlo Simulations. This paper not only showcases the utility of ARD in social network inference but also opens avenues for future research in enhancing estimation precision and exploring diverse network datasets. Through this work, we contribute to the field of network analysis by offering an alternative strategy for understanding complex social networks with constrained data.
\end{abstract}

\begin{center}
{\small \textbf{Keywords:} Network, Bayesian Inference, Aggregate Relational Data}
\end{center}

\clearpage

\section{Introduction}

The analysis of network structures is increasingly recognized as a critical tool in economic and social sciences, providing invaluable insights into the complex interplay of relationships that underpin various economic phenomena and social interactions. The detailed mapping of these networks, capturing the myriad connections between individuals and their attributes, is pivotal for an understanding of collective behaviors and market dynamics. However, the meticulous collection and processing of network data pose substantial challenges. Traditional methodologies necessitate conducting comprehensive surveys to capture every potential link among individuals, followed by a labor-intensive process of collating, matching, and structuring this data into analyzable formats. This not only demands significant resources but also introduces logistical hurdles that can hamper the scope and scale of research endeavors.

Recognizing these challenges, \cite{breza2020using} advocates for a paradigm shift towards leveraging Aggregate Relational Data (ARD) as a pragmatic alternative. Unlike traditional network data that requires detailed enumeration of individual-level links, ARD focuses on summarizing network characteristics at an aggregate level, such as the total number of connections an individual has or the presence of specific attributes within one's network. This approach, while streamlining data collection through simpler questionnaire-based methods, inherently trades off the granularity of information for operational efficiency and reduced complexity.

This paper seeks to bridge the gap between the rich insights afforded by detailed network data and the practical advantages of ARD. Drawing on the theoretical foundation laid by \cite{mele2017structural}, it extends their inferential framework to incorporate ARD, thereby tackling the dual challenges of computational complexity and data accessibility. By employing a Markov chain Monte Carlo (MCMC) method in tandem with the variational approximation techniques by \cite{melezhu2017approximate}, the proposed methodology yields a sequence of parameter estimates that approximate the posterior distribution of interest. This approach acknowledges the inherent limitations of ARD—primarily, its reduced detail level—and attempts to compensate for these through strategic selection of ARD statistics that retain critical information relevant to understanding network formation and evolution.

To rigorously assess the viability and effectiveness of ARD in network analysis, the study conducts a Monte Carlo Simulation, designed to simulate networks based on predefined parameters. This simulation not only tests the robustness of the proposed estimation method but also provides empirical evidence on the trade-offs involved in adopting ARD over traditional network data. The findings from this exercise are expected to shed light on the conditions under which ARD can serve as a reliable substitute, highlighting the potential limitations and the strategies for mitigating information loss.

In essence, this paper contributes to the evolving landscape of network analysis by advocating for the strategic use of ARD in situations where traditional data collection methods are impractical. Through detailed simulations and analyses, this research underscores the potential of ARD to unlock new avenues for economic and social network studies, enriching the understanding of the underlying mechanisms that drive network formation and dynamics.

\section{Representation of Aggregate Relational Data}\label{section:ard}

Aggregate relational data offers a different perspective when detailed network data is inaccessible. Unlike conventional methods where the entire network data is required, ARD focuses on capturing certain relational features of the network. Typically, ARD is sourced from survey queries such as "how many friends do you have?", "Among your friends, how many smoke?", or "How many of your friends possess a college degree?". These questions shed light on specific features of individuals within the larger network context.

A notable advantage of ARD is its cost-effectiveness, as collecting comprehensive network data can be prohibitively expensive and challenging. The standard approach necessitates an exhaustive census, extensive surveys to every individual, and a matching mechanism correlating responses with social ties. Field surveys can make these steps particularly burdensome. By shifting from complete network data to ARD, researchers sacrifice granularity for efficiency, but the resultant ARD still reveals significant information about network structures and aids in estimating unknown coefficients \citep{breza2020using}.

To express the ARD, I first introduce some basic notations. I consider a finite population \(I = \{1, 2, ..., n\}\) of individuals. 
A network is represented as \(g  \in \mathcal{G}\), where \(\mathcal{G}\) denotes the set of all possible networks. The characteristics of individual \(i\) are denoted by \(X_i\), and the combination of all individuals' characteristics is represented as \(X = (X_1, ..., X_n)^T \in \mathcal{X}\), where \(\mathcal{X}\) is the set of all conceivable characteristics. The utility of individual \(i\) within a network structure \(g\), given the observed characteristics \(X\), is encapsulated by \(U_i(g, X, \varepsilon; \theta)\). I should note that this utility function is only known up to a finite-dimensional parameter \(\theta \in \Theta\), where \(\Theta \subseteq \mathds{R}^{d_\theta}\) denotes the domain and \(d_\theta\) the dimension of the parameter space. As for \(\varepsilon \in \mathcal{E}\), it signifies the unobserved preference shock, which is assumed to follow the i.i.d. Type I extreme value distribution.

For ARD-based survey questions, it is imperative that all features gleaned from the questions are encapsulated within the characteristics \(X\). These questions should be linked to the network's structure, and independent of the parameters of interest, \(\theta\). Formally, ARD statistics can be described by a mapping function \(\psi\), which transforms the unobserved network \(g\) and observed characteristics \(X\) into a \(d_{\psi}\)-dimensional real vector: \(\psi: \mathcal{G}\times \mathcal{X} \rightarrow R^{d_{\psi}}\). 

When researchers observe the aggregate relational data, denoted by $\psi_0$, the latent network is unobservable, while it is assumed to satisfy the equilibrium condision specified in \cite{mele2017structural}. 
The parameter of interest is \(\theta\). The primary goal of our study is to infer the value of \(\theta\) after observing \(\psi_0\) and \(X\). The estimation of \(\theta\) will provide insights into the underlying structure and individuals' preferences.

\begin{example}
This example is based on \cite{banerjee2017credit}. Their data collection approach leveraged a questionnaire comprised of the following questions, designed to probe the relational dynamics within a neighborhood:
\begin{enumerate}
    \item How many households do you know where a woman has ever given birth to twins?
    \item How many households do you know that have a permanent government employee?
    \item How many households do you know with 5 or more children?
    \item How many households do you know where any child has studied beyond the 10th standard?
    \item How many households do you know where any adult contracted typhoid, malaria, or cholera in the past six months?
    \item How many households do you know where an adult has been apprehended by the police?
    \item How many households do you know where at least one woman has had a subsequent marriage?
    \item How many households do you know where a man is concurrently married to more than one wife?
\end{enumerate}

In this context, \(n\) denotes the total count of respondents. The ARD statistics dimension is represented by \(d_{\psi}=8n\). Specifically, elements ranging from \(8i-7\) to \(8i\) in \(\psi(g,X)\) encapsulate the answers provided by the \(i^{th}\) respondent to the aforementioned questions.
\end{example}

The utility of player \(i\) from a network \(g\) with population attributes \(X = (X_1,...,X_n)\) and parameter \(\theta= (\theta_u,\theta_m,\theta_v,\theta_w)\) is given by
\begin{equation}\label{eq:utilityfunction}
    U_i(g,X,\varepsilon;\theta)=
    \sum_{j\neq i} g_{ij} (u_{ij}^{\theta_u}+\varepsilon_{ij})
    + \sum_{j\neq i} g_{ij}g_{ji} m_{ij}^{\theta_m}
    + \sum_{j\neq i} g_{ij} \sum_{k\neq i,j} g_{jk} v_{ik}^{\theta_v}
    + \sum_{j\neq i} g_{ij} \sum_{k\neq i,j} g_{ki} w_{kj}^{\theta_w},
\end{equation}
where
\begin{itemize}
    \item \(u_{ij}^{\theta_u}\equiv u(X_i,X_j;\theta_u)\) is the direct net utility from the link to individual \(j\), with \(\varepsilon=\{ \varepsilon_{ij} \}_{i,j\in I, \; i\neq j}\) being the unobserved direct shock;
    \item \(m_{ij}^{\theta_m}\equiv m(X_i,X_j;\theta_m)\) captures the additional utility if the link is reciprocal (i.e., $g_{ij}=g_{ji}=1$);
    \item \(v_{ik}^{\theta_v}\equiv v(X_i,X_k;\theta_v)\) represents the utility of indirect connection to individual \(k\). When individual \(i\) is deciding whether to create a link to \(j\), he considers \(j\)'s connections and the utility from the indirect link;
    \item \(w_{kj}^{\theta_w}\equiv w(X_k,X_j;\theta_w)\) corresponds to a popularity effect. When individual \(i\) forms a link to \(j\), he inadvertently creates an indirect link to every individual \(k\) who already has a link to \(i\), generating an externality for them.
\end{itemize}
All four terms are real-valued functions. I assume 
\begin{eqnarray*}
  m(X_i,X_j;\theta_m)=m(X_j,X_i;\theta_m) \quad \text{for all } i,j\in I, \\
  w(X_k,X_j;\theta_v)=v(X_k,X_j;\theta_v) \quad \text{for all } k,j\in I.
\end{eqnarray*}
The following potential function summarizes the players' incentives:
 \citep{mele2017structural}: 
\begin{equation}\label{eq:potentialfunction}
  Q(g,X,\varepsilon;\theta)=
  \sum_{i=1}^n \sum_{j\neq i} g_{ij} (u_{ij}^{\theta_u}+\varepsilon_{ij})
  +\sum_{i=1}^n \sum_{j\neq i} g_{ij}g_{ji} m_{ij}^{\theta_m}
  +\sum_{i=1}^n \sum_{j\neq i} \sum_{k\neq i,j} g_{ij}g_{jk} v_{ik}^{\theta_v}.
\end{equation}

Write $\bar{Q}(g,X;\theta) \equiv  \left. Q(g,X,\varepsilon;\theta)\right|_{\varepsilon=0}$ for convenience. 
The stationary distribution of networks in the network formation model  is represented by: 
\begin{equation}\label{stationarydistri}
  \pi(g|X;\theta)=\frac{\exp[\bar Q(g,X;\theta)]}{\sum_{\omega\in \mathcal{G}}\exp[\bar Q(\omega,X;\theta)]}
\end{equation}

With the  stationary distribution of network formation model in place, the likelihood of observing a particular ARD \(\psi_0\) can be denoted as:
\begin{equation}\label{likeliauxi}
  L(\psi_0,X;\theta) = \sum_{g\in \mathcal{G}} 1\{ \psi(g,X)=\psi_0 \} \pi(g|X;\theta)      
\end{equation}
Building on this, the posterior distribution of \(\theta\) with a predefined prior distribution \(p_0(\cdot)\) in light of the ARD observation \(\psi_0\) is expressed as:
\begin{equation}\label{posteriordistribution}
  p(\theta|\psi_0,X) =  \frac{ p_0(\theta) \sum\limits_{g\in \mathcal{G}}  1\{ \psi(g,X)=\psi_0 \}   \pi(g|X;\theta) }
  { \int_{\Theta} p_0(\vartheta) \sum\limits_{g\in \mathcal{G}}  1\{ \psi(g,X)=\psi_0 \}   \pi(g|X;\vartheta) d\vartheta  }
\end{equation}

Aggregate Relational Data  statistics act as intermediaries for network representation, thereby necessitating the encapsulation of key network features. 
Recall that if a entire network $g_0 \in \mathcal G$ is observed, the posterior distribution for $\theta$ is given by 
\begin{equation}\label{postfull}
  p(\theta|g_0,X) = \frac{p_0(\theta) \pi(g_0|X;\theta) }
  { \int_{\Theta} p_0(\vartheta) \pi(g_0|X;\vartheta) d\vartheta }.
\end{equation}
where $p_0(\cdot)$ is a prior distribution over $\Theta$.  
While the posterior distributions from observing the entire network \eqref{postfull} and ARD \eqref{posteriordistribution} differ, I discuss the conditions ensuring their close resemblance in the following section.

\subsection{Selection of Aggregate Relational Data}\label{sec:appendix:ard}

Functionally, an ARD statistic partitions the network space. Specifically, networks with identical ARD values are categorized within the same set, as detailed below.

\begin{definition}
Given a collection of networks $\mathcal{G}$, and any ARD statistic denoted by $\psi: \mathcal{G} \times \mathcal{X} \rightarrow \mathds{R}^{d_{\psi}}$, the domain of $\psi$ is defined as $\tau_{\psi,X}=\left\{ t \mid \exists g \in \mathcal{G} \text{ such that } t=\psi(g,X) \right\}$. This statistic subsequently segments the space $\mathcal{G}$ into partition sets $A_t=\left\{ g\in \mathcal{G} \mid \psi(g,X)=t \right\}$ that are both mutually exclusive and collectively exhaustive, for each $t \in \tau_{\psi,X}$.
\end{definition}

Such partitioning inherently results in information loss by undifferentiating networks within the same set. An optimal ARD statistic would retain vital information regarding parameters of interest, denoted by $\theta$. This sentiment aligns with the definition of sufficiency, as outlined by \citep{casella2021statistical}.

\begin{definition}
For given characteristics $X$, an ARD statistic $\psi$ achieves sufficiency for parameter $\theta$ if and only if the conditional probability $\Pr[g|\psi(g,X);\theta]$ is independent of $\theta$, applicable across all $g \in \mathcal{G}$ and $\theta \in \Theta$.
\end{definition}

To elucidate this concept, we provide an illustrative example below.

\begin{example}[Sufficient ARD Statistic for a Simple Network]\label{examp:suffi}
Consider a network constituted by $n=4$ individuals: A, B, C, and D, with respective wealth values $W$ of 600, 500, 200, and 100. Their utility function is represented as:
\begin{equation*}
    U_i(g,W,\varepsilon;\theta)= \sum_{j=1}^n g_{ij} \left[\theta_0+\theta_1 1\left\{ |W_i-W_j| = \min_l |W_i-W_l| \right\} + \varepsilon \right]
\end{equation*}
Here, the aspects of indirect friendships and reciprocal friendships are disregarded. We focus on the parameter $\theta=(\theta_0,\theta_1)$, with the stipulation that $\theta_1>\theta_0$.

\begin{figure}[htbp]
    \centering
    \includegraphics[width=0.2\linewidth]{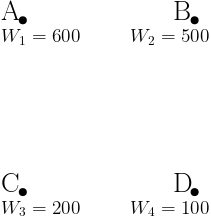}
\caption{Illustrative Representation of a Simple Network}
\label{wealth}
\end{figure}

The ARD statistics can be generated based on these inquiries:
\begin{enumerate}
    \item How many outward links do you possess?
    \item How many inward links are directed to you?
    \item How many links lead from you to individuals with wealth exceeding 400?
\end{enumerate}

Using the aforementioned inquiries, we can capture key relational dynamics within the network. It is feasible to ascertain the sufficiency of these metrics in relation to our parameters of interest. 
\end{example}

\begin{proposition}\label{prop:suffiexamp}
  The ARD statistics delineated in Example \ref{examp:suffi} are sufficient for estimating the parameters of interest $\theta$.
\end{proposition}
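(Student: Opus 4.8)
The plan is to exploit the collapse of the stationary law \eqref{stationarydistri} to a two-parameter exponential family once the reciprocal and indirect channels are switched off, and to reduce sufficiency to a purely combinatorial recovery problem. Setting $\varepsilon=0$ in the utility of Example~\ref{examp:suffi}, the potential becomes
\[
\bar Q(g,W;\theta)=\theta_0\,T_0(g)+\theta_1\,T_1(g),\qquad T_0(g)=\sum_{i}\sum_{j\neq i}g_{ij},\quad T_1(g)=\sum_{i}g_{i,s(i)},
\]
where $s(i)=\argmin_{l\neq i}|W_i-W_l|$ denotes $i$'s nearest-wealth neighbour, so that $\pi(g|W;\theta)\propto\exp[\theta_0 T_0(g)+\theta_1 T_1(g)]$. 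For any cell $A_t=\{g:\psi(g,W)=t\}$ and any $g\in A_t$,
\[
\Pr[g\mid \psi(g,W)=t;\theta]=\frac{\exp[\theta_0 T_0(g)+\theta_1 T_1(g)]}{\sum_{g'\in A_t}\exp[\theta_0 T_0(g')+\theta_1 T_1(g')]},
\]
so if $(T_0,T_1)$ is constant on $A_t$ the $\theta$-dependent factors cancel and this conditional law is uniform on $A_t$, hence independent of $\theta$. The whole proof therefore reduces to showing that $(T_0(g),T_1(g))$ is a deterministic function of the three ARD statistics.

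Next I would carry out that recovery explicitly. The first statistic (out-degree) summed over the four respondents returns $T_0$ at once. For $T_1$, the wealth ordering $600>500>200>100$ forces the nearest-neighbour pairs to be $A\leftrightarrow B$ and $C\leftrightarrow D$, so $T_1(g)=g_{AB}+g_{BA}+g_{CD}+g_{DC}$. The third statistic counts, for each respondent, the links sent into the high-wealth set $\{A,B\}$ (wealth exceeding $400$): for A and B it returns $g_{AB}$ and $g_{BA}$ directly, while for C and D it returns $g_{CA}+g_{CB}$ and $g_{DA}+g_{DB}$. Differencing these against the out-degrees then isolates $g_{CD}=d^{\mathrm{out}}_C-(g_{CA}+g_{CB})$ and $g_{DC}=d^{\mathrm{out}}_D-(g_{DA}+g_{DB})$, and summing the four recovered entries gives $T_1$.

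The hard part will be exactly this last differencing step: the survey never interrogates the low-wealth pair $\{C,D\}$ directly, so the links $g_{CD}$ and $g_{DC}$ that carry $\theta_1$-weight must be backed out indirectly from the out-degree and the high-wealth counts rather than read off any single question; verifying that the available statistics genuinely pin them down (and are not confounded with $g_{CA},g_{CB},g_{DA},g_{DB}$) is where the argument could fail for a less carefully chosen questionnaire. Once this identity is in hand, constancy of $(T_0,T_1)$ on every $A_t$ is immediate, the exponential factors cancel as above, and $\Pr[g\mid\psi(g,W);\theta]$ is independent of $\theta$ for all $g$ and all $\theta$, which is precisely the required sufficiency. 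I note in passing that the in-degree (second statistic) is not needed for this argument, consistent with the potential depending on $g$ only through $T_0$ and $T_1$.
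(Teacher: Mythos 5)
Your proof is correct, and it takes a genuinely different route from the paper's. The paper proceeds by exhaustive reconstruction: its Step 1 recovers $g_{AB}$, $g_{BA}$, $g_{CD}$, $g_{DC}$ exactly as you do (Question 3 for A and B; Question 1 minus Question 3 for C and D), but it then launches a case analysis over the remaining eight links, showing that the ARD either pins the network down uniquely or leaves an ambiguity class of two or four networks whose members have identical potential, so that the conditional probability equals $1/2$ or $1/4$ regardless of $\theta$. You bypass the case analysis entirely by observing that, with the reciprocity and indirect channels switched off, the stationary law \eqref{stationarydistri} is a two-parameter exponential family with sufficient statistic $(T_0,T_1)$, so that sufficiency of $\psi$ follows once $(T_0,T_1)$ is shown to be a deterministic function of $\psi$: the normalizing constant cancels in the conditional, the $\theta$-dependent factor is constant on each cell $A_t$, and the conditional law is uniform on $A_t$. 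This is cleaner and more general --- it explains the paper's $50\%$/$25\%$ probabilities (uniformity on cells) as a consequence rather than as a computation, it applies verbatim to any questionnaire from which $(T_0,T_1)$ can be backed out, and it makes explicit that the in-degree question is redundant for sufficiency (whereas the paper's reconstruction argument actively uses it in Step 2). What the paper's approach buys in exchange is a complete description of the partition cells themselves, i.e.\ exactly which networks the ARD can and cannot distinguish, which feeds the discussion of ARD selection in Section \ref{sec:appendix:ard} but is not needed to establish the proposition.
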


Sufficiency, in the context of our model, dictates that the probability distribution of the network remains independent of $\theta$, conditional on the ARD statistics. This independence entails that the ARD statistics harness every ounce of information on $\theta$ embedded within the network. Integrating this principle with the network formation model culminates in the subsequent theorem:

\begin{theorem}\label{sufficientaux}
When $\psi$ proves sufficient for $\theta$, the posterior distribution drawn from observing $\psi(g_0,X)$ in Equation \eqref{posteriordistribution} aligns perfectly with the posterior distribution sourced from observing $g_0$ in Equation \eqref{postfull}.
\end{theorem}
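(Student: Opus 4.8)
The plan is to recognize this as the Fisher--Neyman factorization argument adapted to the discrete network setting: sufficiency forces the full-network likelihood to split into a $\theta$-free factor and a factor that depends on $\theta$ only through the ARD value $\psi_0 = \psi(g_0,X)$. Once that split is established, the two posteriors differ only by a multiplicative constant in $\theta$, which cancels in both numerator and denominator.

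First I would write the conditional probability of the network given its ARD value explicitly from the stationary distribution \eqref{stationarydistri}. For any $g$ with $\psi(g,X)=\psi_0$,
\begin{equation*}
  \Pr[g \mid \psi(g,X)=\psi_0; \theta]
  = \frac{\pi(g \mid X;\theta)}{\sum_{\omega \in \mathcal{G}} 1\{\psi(\omega,X)=\psi_0\}\, \pi(\omega \mid X;\theta)}
  = \frac{\pi(g \mid X;\theta)}{L(\psi_0,X;\theta)},
\end{equation*}
where the denominator is exactly the ARD likelihood \eqref{likeliauxi}. I would note that the denominator is strictly positive whenever $\psi_0$ is actually observed, since $g_0$ itself lies in the conditioning set, so the ratio is well defined.

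Next I would invoke the definition of sufficiency: by assumption this conditional probability does not depend on $\theta$, so call its value $h(g)$. Evaluating at $g=g_0$ yields the factorization
\begin{equation*}
  \pi(g_0 \mid X;\theta) = h(g_0)\, L(\psi_0,X;\theta),
\end{equation*}
in which $h(g_0)$ is a constant free of $\theta$. Substituting this identity into the full-network posterior \eqref{postfull} and pulling $h(g_0)$ out of both the numerator and the integral in the denominator, the factor $h(g_0)$ cancels, leaving
\begin{equation*}
  p(\theta \mid g_0, X)
  = \frac{p_0(\theta)\, L(\psi_0,X;\theta)}{\int_\Theta p_0(\vartheta)\, L(\psi_0,X;\vartheta)\, d\vartheta}.
\end{equation*}
The right-hand side is precisely the ARD posterior \eqref{posteriordistribution} with $\psi_0 = \psi(g_0,X)$, which establishes the claimed equality.

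The argument is short, and the only real obstacle is bookkeeping rather than analysis. One must make sure the factorization is applied at the realized $g_0$ and not at an arbitrary $g$, and verify that the cancelled factor $h(g_0)$ is genuinely independent of $\theta$ (which is exactly what the sufficiency definition delivers) and nonzero. A secondary point worth stating carefully is that sufficiency is assumed across all $g$ and all $\theta$, so in particular the realized $g_0$ is covered; because $\mathcal{G}$ is finite and the $\theta$-integral is taken against the fixed prior $p_0$, no measurability or integrability issues arise.
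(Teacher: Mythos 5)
Your proof is correct and follows essentially the same route as the paper: both rest on the factorization $\pi(g_0\mid X;\theta)=\Pr[g_0\mid\psi(g_0,X)]\cdot L(\psi(g_0,X),X;\theta)$, with sufficiency making the conditional factor $\theta$-free so that it cancels between numerator and denominator of the Bayes ratio. You merely run the cancellation from the full-network posterior toward the ARD posterior rather than the reverse, and your explicit division form of the factorization (plus the positivity remark) is, if anything, a cleaner statement of the step the paper writes somewhat loosely.
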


For a fixed network $g_0$, the conditional probability of $g_0$ with respect to its ARD $\psi(g_0,X)$ can be depicted as: 
\begin{equation*}
  P[g_0|\psi(g_0,X);\theta]=\frac{\pi(g_0|X;\theta)}{L(\psi(g_0,X),X;\theta)}
\end{equation*}

Given that a sufficient ARD statistic can impeccably replicate the posterior distribution, we conduct a closer examination. To elucidate, consider the following decomposition: 
\begin{equation*}
  \frac{1}{P[g_0|\psi(g_0,X);\theta]}  
   = \sum_{g\in\mathcal{G}} 1\{ \psi(g,X)=\psi(g_0,X) \} \exp\left[ \bar Q(g,X;\theta)-\bar Q(g_0,X;\theta) \right]
\end{equation*}

The decomposition reveals a pivotal characteristic: for sufficiency of $\psi$, within each partition, regardless of the choice of $g_0$, the summation of the exponential difference between the potential function of $g_0$ and all other networks $g$ must remain uninfluenced by $\theta$. 
This insight imparts several nuanced takeaways: 
\begin{itemize}
    \item Even when identifying or collecting a sufficient statistic proves challenging, opting for an insufficient statistic wherein $P[g_0|\psi(g_0,X);\theta]$ showcases relative steadiness against the fluctuation of $\theta$ can more closely emulate the target posterior distribution \eqref{postfull}.
    \item A higher dimensional ARD statistic, denoted $d_{\psi}$, emerges as a preferable choice. A more refined partition curtails the variance in $P[g_0|\psi(g_0,X);\theta]$.
    \item Incorporating more details from the network formation model into the ARD statistic, $\psi$, can further reduce the variance of $P[g_0|\psi(g_0,X);\theta]$.
\end{itemize}

\subsection{Estimation with Aggregate Relational Data}

To draw inferences from the posterior distribution in \eqref{posteriordistribution}, we encounter the predicament where the network \( g_0 \) is obscured, leaving only the ARD statistics accessible. The computation now becomes considerably daunting. 

As in \cite{mele2017structural}, the normalizing constant $c(X;\theta)\equiv \sum_{\omega\in \mathcal{G}}\exp[\bar Q(\omega,X;\theta)]$ brings challenges in computation within the MCMC framework. 
Inherent in its computation is the exhaustive exploration of all potential networks in $\mathcal{G}$. Given a sample size of $n$ nodes, the magnitude of possibilities is overwhelming: $|\mathcal{G}|=2^{n(n-1)/2}$. The exponential relationship between the number of possible configurations and $n$ renders the computation of the posterior distribution infeasible once $n>10$. 
However, in the case of ARD, 
not only is the normalizing constant \( c(X;\theta) \) intractable, but the summation across \( g \) in \eqref{likeliauxi} also becomes problematic. This is because it requires an exhaustive traversal of every possible network \( g \) to determine the adherence to the ARD condition \( \psi(g,X) = \psi_0 \).

Due to these computational intricacies, traditional methodologies such as the general Metropolis-Hasting method or the exchange algorithm falter. They fail to yield a reasonable acceptance rate without computationally expensive traversals of \( g \).

To navigate through the challenges of the ARD scenario, we posit the underlying network as a latent variable. Consequently, we incorporate it into our algorithmic framework. Assuming \( p_c(\theta,g) \) denotes the joint prior of \( (\theta,g) \), the posterior distribution of \( (\theta,g) \) post observation of \( \psi_0 \) can be rearticulated as: 
\begin{equation} \label{posteriortwovar}
  p(\theta,g|\psi_0,X)=
  \frac{p_c(\theta,g)  \mathds{1}\left\{ \psi(g,X)=\psi_0 \right\} \pi(g|X;\theta) }
  {\mathcal{Z}(\psi_0,X)},
\end{equation}
where the normalizing term is defined as: 
\begin{equation*}
  \mathcal{Z}(\psi_0,X)= \int_{\Theta} \sum_{\omega\in\mathcal{G}} p_c(\vartheta,\omega) \mathds{1}\left\{ \psi(\omega,X)=\psi_0 \right\} \pi(\omega|X;\vartheta)  d\vartheta.
\end{equation*}

A pivotal criterion ensuring that \eqref{posteriordistribution} mirrors the marginal distribution of \eqref{posteriortwovar}, concerning summation over network \( g \), is that the prior ratio, \( \frac{p_c(\theta, g)}{p_0(\theta)} \), remains consistent and uninfluenced by both \( \theta \) and \( g \). In real-world implementations, the formation of \( p_c(\theta,g) \) from \( p_0(\theta) \) to accommodate ARD estimations is typically achieved by postulating a uniform distribution of \( g \) spanning all conceivable network architectures relative to \( p_0(\theta) \).

While the numerator of \eqref{posteriortwovar} eliminates the need for summation over \( g \in \mathcal{G} \), it presents computational challenges when estimating both \( \theta \) and \( g \) in the MCMC framework. Specifically, the persistent issue of the intractable normalizing constant \( c(X;\theta) \) complicates the sampling procedure for both \( \theta \) and \( g \). 
\footnote{In an attempt to mitigate this issue, one could consider applying the exchange algorithm twice: first for \( g \) and then for \( \psi \). However, this approach is problematic for several reasons. The most crucial issue is meeting the detailed balance condition for the algorithm to converge. This condition necessitates that the potential function \( \bar Q(g,X;\vartheta) \) satisfy:
\begin{equation*}
\bar Q(g,X;\vartheta) + \bar Q(g',X;\vartheta') = \bar Q(g',X;\vartheta) + \bar Q(g,X;\vartheta')
\end{equation*}
for all \( \theta, \theta' \in \Theta \) and \( g,g' \in \mathcal{G} \). 
In our context, this equality is implausible because the potential function encapsulates the sum of all player incentives for a given network state. Swapping \( g \) and \( g' \) would unavoidably change the summative value of the function, breaking the required symmetry and violating the detailed balance condition. Thus, iterative application of the exchange algorithm is not feasible for our problem.}

\subsubsection{Variational Approximation}

Addressing the computational challenges posed by the intractable normalizing constant \( c(X;\theta) \) calls for innovative solutions. Once we successfully approximate \( c(X;\theta) \), we can then employ the Metropolis-Hasting technique to sample both \( \theta \) and \( g \) from the posterior, as detailed in \eqref{posteriortwovar}.

An effective method for handling the intractable normalizing constant has been proposed by \cite{melezhu2017approximate}. This method employs variational approximation \citep{jordan1999introduction} to compute the elusive constant. Central to this approach is the identification of an approximate likelihood, a function of the network denoted as \( q(\cdot) \), which minimizes the Kullback-Leibler divergence from the true likelihood \( \pi(\cdot|X;\theta) \) for a specific realization of \( X \) and a given \( \theta \). It is noteworthy that, given the dynamics of network formation, this likelihood is inherently defined by the stationary distribution \eqref{stationarydistri}.

Recalling that the normalizing constant is essentially the sum of exponentiated potential functions, let us begin by defining the statistics \( T \) and \( \phi \) as:
\begin{align*}
  T(g,X;\theta) &\equiv  \frac{1}{n^2} \bar Q(g,X;\theta),    \\
  \phi(X;\theta) &\equiv  \frac{1}{n^2} \log \sum_{\omega\in\mathcal G}\exp [\bar Q(\omega,X;\theta)].
\end{align*}
Consequently, the true likelihood can be represented by:
\begin{equation*}
  \pi(g|X;\theta) = \exp\left\{ n^2[ T(g,X;\theta) - \phi(X;\theta) ] \right\}.
\end{equation*}

The Kullback-Leibler divergence between \( q(\cdot) \) and \( \pi(\cdot|X;\theta) \) is then given by:
\begin{align*}
  KL(q\|\pi) &= \sum_{\omega\in\mathcal{G}} q(\omega) \log \left[ \frac{q(\omega)}{\pi(\omega|X;\theta)} \right], \\
  &= \sum_{\omega\in\mathcal{G}} q(\omega)\left[ \log q(\omega) - n^2 T(\omega,X;\theta) + n^2 \phi(X;\theta)  \right].
\end{align*}
Since \( KL(q\|\pi) \geq 0 \) inherently, rearranging this inequality to isolate \( q \), we obtain:
\begin{equation}\label{ineqpsi}
  \phi(X;\theta) \geq \mathbb{E}_q[T(g,X;\theta)] + \frac{1}{n^2}\mathcal{H}(q),
\end{equation}
where \( \mathcal{H}(q) = - \sum_{\omega\in\mathcal{G}} q(\omega) \log q(\omega) \) denotes the entropy of the distribution \( q \), and \( \mathbb{E}_q[T(g,X;\theta)] \) signifies the expected value of the re-scaled potential function under distribution \( q \).

Notably, \( \phi(X;\theta) \) acts as a re-scaled normalizing constant. From Equation \eqref{ineqpsi}, the right-hand side offers a lower bound. Given that the right-hand side is a function of \( q \) — which we can choose freely — minimizing \( KL(q|\pi) \) with respect to \( q \) aligns with maximizing the right-hand side of Equation \eqref{ineqpsi}. This strategy aids in pinpointing the optimal likelihood approximation.

To address the optimization challenge, the variational approximation method constrains the approximate likelihood of observing network \( g \) to a factorized form:
\begin{equation*}
  q(g)=\prod_{i=1}^n \prod_{j\neq i} \mu_{ij}^{g_{ij}} (1-\mu_{ij})^{1-g_{ij}},
\end{equation*}
where \( \mu_{ij} \) represents the probability \( \Pr_{q_n}(g_{ij}=1) \). This formulation is referred to as the mean field approximation for the likelihood of \( g \), signifying that the likelihood \( q \) is represented by the product of the likelihoods for individual links.

Although this configuration lacks a microeconomic foundation, \( q(g) \) covers the entire likelihood space for \( g \) in \( \mathcal G \), providing justification for this approximation form. Direct algebraic computation reveals the entropy of \( q \) as:
\begin{equation*}
  \frac{1}{n^2}\mathcal{H}(q) = -\frac{1}{2n^2} \sum_{i=1}^n \sum_{j\neq i} \left[ \mu_{ij}\log \mu_{ij} + (1-\mu_{ij})\log (1-\mu_{ij}) \right].
\end{equation*}

Referring back to the potential function in Equation \eqref{eq:potentialfunction}, the expected value of the re-scaled potential is:
\begin{align*}
 \mathbb{E}_q[T(g,X;\theta)] &= \sum_{i=1}^n \sum_{j\neq i} \frac{1}{n^2} \mu_{ij}[ u(X_i,X_j;\beta) ] 
           + \frac{1}{n^2}\sum_{i=1}^n\sum_{j\neq i} \mu_{ij}\mu_{ji}\gamma_1 \\
   &+ \frac{1}{n^2} \sum_{i=1}^n\sum_{j\neq i} \sum_{k\neq i,j} \mu_{ij}\mu_{jk}\gamma_2.
\end{align*}

With the constraints on \( q \), our objective is to discover a vector \( \mathbf{\mu} \) for:
\begin{equation*}
    \sup_{\mathbf{\mu}\in [0,1]^{n(n-1)}} \left\{
\mathbb{E}_q[T(g,X;\theta)]+\frac{1}{n^2}\mathcal{H}(q)
    \right\},
\end{equation*}
where \( \mathbf{\mu} \) consists of elements \( \mu_{ij} \) for distinct \( i \) and \( j \). Substituting the optimized estimates of \( \mathbf \mu \) into the aforementioned supremum yields \( \phi^{MF}(X;\theta) \) as an approximation for \( \phi(X;\theta) \). Consequently, \( c^{MF}(X;\theta) = \exp[n^2 \phi^{MF}(X;\theta)] \) serves as an approximation for the normalizing constant.

Theorem 1 from \cite{melezhu2017approximate} asserts that, for any given \( X \) and \( \theta \):
\begin{equation*}
     \phi(X;\theta) - \phi^{MF}(X;\theta) \rightarrow 0
    \quad \text{ as }  n \rightarrow \infty,
\end{equation*}
affirming the validity of the mean field approximation method.

\subsubsection{Implementation}

Once the approximated normalizing constant $c^{MF}(X;\theta)$ is acquired, the Metropolis-Hasting method can be applied to sample pairs $(\theta,g)$ from the posterior distribution given by Equation \eqref{posteriortwovar}. The procedure is outlined as follows:

\begin{algorithm}[Metropolis-Hasting for ARD]\label{algo:ard}
Iterate the steps for a total of \( T \) times. At the \( t \)-th iteration with current parameter value \( \theta \) and network \( g \):

1. Propose a new parameter value \( \theta' \) using the proposal distribution \( q_{\theta}(\cdot | \theta) \). Generate a new network \( g' \) via the network formulation algorithm, given \( \theta \).

2. Accept the pair \( (\theta',g') \) with probability:
\begin{align*}
&\alpha(\theta,g,\theta',g'|\psi_0,X) \\
&= \min\left\{ 1, 
\frac{p_c(\theta',g')}{p_c(\theta,g)}
\frac{\exp[\bar Q(g',X;\theta')] / c^{MF}(X;\theta')}  {\exp[\bar Q(g,X;\theta)] / c^{MF}(X;\theta)}
\frac{\mathds{1}\left\{ \psi(g',X)=\psi_0 \right\}}{\mathds{1}\left\{ \psi(g,X)=\psi_0 \right\}} \cdot \right. \\
&\quad \left. \frac{q_{\theta}(\theta|\theta')}{q_{\theta}(\theta'|\theta)}
\frac{\exp[\bar Q(g,X;\theta')] / c^{MF}(X;\theta')}  {\exp[\bar Q(g',X;\theta)] / c^{MF}(X;\theta)}
\right\}.
\end{align*}
\end{algorithm}

Note that in Step 1, \( g' \) is simulated using \( \theta \) rather than \( \theta' \) since it serves as the proposal. As such, the proposal distribution for \( g' \) corresponds to the stationary distribution of networks given \( \theta \).

Algorithm \ref{algo:ard} still demands the generation of a network \( g' \) from \( \theta \). It also incorporates an indicator to verify if \( g' \) aligns with the ARD statistic \( \psi_0 \). Another such indicator function appears in the denominator as per the Metropolis-Hasting formulation. This is not an issue, as during repeated algorithm iterations, only networks consistent with the ARD statistic are accepted, ensuring the denominator's indicator is always unity. Importantly, if the algorithm does not initiate from a \( g \) where \( \psi(g,X)=\psi_0 \), the acceptance ratio becomes \( \min\{ 1,\infty \} \), implying the proposal is always accepted until a \( g \) consistent with the ARD emerges.

The application of the Metropolis-Hasting framework ensures the algorithm's convergence. Substituting $c(X;\theta)$ with $c^{MF}(X;\theta)$ in Equation \eqref{stationarydistri}, we derive 
\begin{equation*}
  \pi^{MF}(g|X;\theta)=\frac{\exp[\bar Q(g,X;\theta)]}{ c^{MF}(X;\theta) }
\end{equation*}
By introducing $\pi^{MF}(g|X;\theta)$ into Equation \eqref{posteriortwovar} in place of $\pi(g|X;\theta)$, we obtain the mean field approximated posterior distribution 
\begin{equation*} 
  p^{MF}(\theta,g|\psi_0,X)=
  \frac{p_c(\theta,g)  \mathds{1}\left\{ \psi(g,X)=\psi_0 \right\} \pi^{MF}(g|X;\theta) }
  {\mathcal{Z}^{MF}(\psi_0,X)},
\end{equation*}
where the normalizing term is defined as: 
\begin{equation*}
  \mathcal{Z}^{MF}(\psi_0,X)= \int_{\Theta} \sum_{\omega\in\mathcal{G}} p_c(\vartheta,\omega) \mathds{1}\left\{ \psi(\omega,X)=\psi_0 \right\} \pi^{MF}(\omega|X;\vartheta)  d\vartheta.
\end{equation*}
The convergence property of this methodology is articulated in the following theorem:

\begin{theorem}\label{convergeofard}
Given adequate iterations, the distribution of \( (\theta,g) \) within Algorithm \ref{algo:ard} will converge to the posterior distribution $p^{MF}(\theta,g|\psi_0,X)$.
\end{theorem}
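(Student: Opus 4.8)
The plan is to recognize Algorithm~\ref{algo:ard} as a Metropolis--Hastings (MH) sampler on the augmented space $\Theta\times\mathcal{G}$ whose target is the density $p^{MF}(\theta,g|\psi_0,X)$. Convergence then follows from the standard ergodic theory for MH chains once two ingredients are verified: first, that $p^{MF}$ is the invariant distribution, which I obtain from detailed balance by checking that the acceptance probability $\alpha$ is exactly the MH ratio for this target and proposal; and second, that the chain is $p^{MF}$-irreducible and aperiodic. Throughout I treat the approximate normalizing constant $c^{MF}(X;\theta)$ as fixed, so that $\pi^{MF}(g|X;\theta)=\exp[\bar Q(g,X;\theta)]/c^{MF}(X;\theta)$ is a proper (positive, summing to one) proposal law for networks.

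The key step is to identify the joint proposal kernel and form the MH ratio. Step~1 draws $\theta'\sim q_{\theta}(\cdot|\theta)$ and then $g'\sim\pi^{MF}(\cdot|X;\theta)$ using the \emph{current} $\theta$, so the forward proposal density factorizes as $Q((\theta',g')|(\theta,g))=q_{\theta}(\theta'|\theta)\,\pi^{MF}(g'|X;\theta)$. The crucial bookkeeping is the reverse move: from $(\theta',g')$ the same mechanism proposes $(\theta,g)$ with density $Q((\theta,g)|(\theta',g'))=q_{\theta}(\theta|\theta')\,\pi^{MF}(g|X;\theta')$, where the network factor is now evaluated at $\theta'$. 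Forming $\tfrac{p^{MF}(\theta',g'|\psi_0,X)}{p^{MF}(\theta,g|\psi_0,X)}\cdot\tfrac{Q((\theta,g)|(\theta',g'))}{Q((\theta',g')|(\theta,g))}$, the normalizer $\mathcal{Z}^{MF}(\psi_0,X)$ cancels, and substituting the explicit form of $\pi^{MF}$ I would show term by term that this equals the product inside $\alpha(\theta,g,\theta',g'|\psi_0,X)$. The heart of the verification is the identity
\begin{equation*}
\frac{\pi^{MF}(g'|X;\theta')}{\pi^{MF}(g|X;\theta)}\cdot\frac{\pi^{MF}(g|X;\theta')}{\pi^{MF}(g'|X;\theta)}
=\frac{\exp[\bar Q(g',X;\theta')]/c^{MF}(X;\theta')}{\exp[\bar Q(g,X;\theta)]/c^{MF}(X;\theta)}\cdot\frac{\exp[\bar Q(g,X;\theta')]/c^{MF}(X;\theta')}{\exp[\bar Q(g',X;\theta)]/c^{MF}(X;\theta)},
\end{equation*}
which reproduces exactly the two likelihood-ratio factors in the algorithm, while the prior ratio $p_c(\theta',g')/p_c(\theta,g)$, the proposal ratio $q_{\theta}(\theta|\theta')/q_{\theta}(\theta'|\theta)$, and the indicator ratio appear directly from the target and proposal. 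Hence $\alpha$ is the genuine MH acceptance probability for $p^{MF}$, the transition kernel satisfies detailed balance with respect to $p^{MF}$, and $p^{MF}(\theta,g|\psi_0,X)$ is invariant.

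Finally, for convergence I would establish irreducibility and aperiodicity. Since $\exp[\bar Q(\cdot)]>0$, the network proposal $\pi^{MF}(\cdot|X;\theta)$ assigns positive probability to every $g\in\mathcal{G}$, in particular to every $g$ with $\psi(g,X)=\psi_0$; combined with a $q_{\theta}$ whose support covers the relevant region of $\Theta$, the chain can reach any $(\theta,g)$ of positive target mass, and the strictly positive rejection probability yields aperiodicity. The ergodic theorem for MH chains (irreducible and aperiodic on the finite network factor, Harris recurrent on the $\theta$ factor) then gives convergence of the law of $(\theta,g)$ to $p^{MF}(\theta,g|\psi_0,X)$. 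The main obstacle I anticipate is the correct treatment of the network proposal: because $g'$ is generated from the current $\theta$ rather than $\theta'$, the reverse proposal density must be evaluated at $\theta'$, and it is precisely this asymmetry that produces the second $\exp[\bar Q(g,X;\theta')]/c^{MF}(X;\theta')$ factor in $\alpha$. A secondary subtlety is justifying that the denominator indicator $\mathds{1}\{\psi(g,X)=\psi_0\}$ equals one along the trajectory, which follows once the sampler has entered the ARD-consistent region, as noted in the discussion after Algorithm~\ref{algo:ard}.
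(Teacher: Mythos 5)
Your proposal is correct and follows essentially the same route as the paper: both identify Algorithm \ref{algo:ard} as a Metropolis--Hastings sampler on $\Theta\times\mathcal{G}$ whose network proposal is drawn under the \emph{current} $\theta$ (so the reverse proposal density is evaluated at $\theta'$), and both verify detailed balance, your MH-ratio bookkeeping being an algebraically equivalent rearrangement of the paper's direct computation. If anything, your version is the more consistent one: the paper's proof is written with the exact $\pi$ and $c(X;\theta)$ and targets Equation \eqref{posteriortwovar}, whereas you use $\pi^{MF}$ and $c^{MF}$ throughout so that the stationary distribution is exactly the $p^{MF}(\theta,g|\psi_0,X)$ claimed in the theorem, and you additionally supply the irreducibility and aperiodicity argument that the paper leaves implicit in its appeal to "the nature of Metropolis-Hasting."
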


In conjunction with Theorem 1 from \cite{melezhu2017approximate}, which underscores the proximity between the normalizing constant and its mean field approximation, Theorem \ref{convergeofard} affirms that the algorithm reliably converges to the observed ARD statistics' pertinent posterior distribution, albeit through a mean field approximation.

From the sampled pairs \( (\theta,g) \), by discarding \( g \), the residual \( \theta \) constitutes a sample from \( \theta \)'s marginal distribution. Moreover, the choice of the ARD statistic does not influence convergence, though it might distort the underlying network's posterior distribution, as discussed above.

An inherent challenge in the practical implementation of the algorithm arises when proposing a network \(g'\) that satisfies \(\psi(g',X)=\psi_0\). This satisfaction can be elusive due to the strictness of \(\psi\) which might allow only for a narrow range of \(g\) compared to the vast set \(\mathcal{G}\). When \(\psi\) is particularly restrictive, it becomes hard to propose a suitable \(g\) that meets the requirements, thereby causing the algorithm to potentially converge slowly.

To circumvent this hurdle, I introduce a relaxation mechanism. Instead of strictly requiring $\psi(g',X)=\psi_0$, I replace the indicator function $\mathds{1}\left\{ \psi(g',X)=\psi_0 \right\}$ with $\mathds{1}\left\{ \|\psi(g',X)-\psi_0\| \leq \delta \right\}$, where $\| \cdot \|$ represents the norm for $\psi$, and $\delta$ is a predetermined tolerance level. This replacement results in the algorithm converging to the posterior distribution defined as:
\begin{equation} \label{posteriorleq}
  p^{MF}_{\delta}(\theta,g|\psi_0,X) = 
  \frac{p_c(\theta,g) \frac{\exp[\bar Q(g,X;\theta)]}{c^{MF}(X;\theta)} \mathds{1}\left\{ \|\psi(g,X)-\psi_0\| \leq \delta \right\} }
  {\mathcal{Z}^{MF}_{\delta}(\psi_0,X)},
\end{equation}
with the denominator given by:
\begin{equation*}
  \mathcal{Z}^{MF}_{\delta}(\psi_0,X) = \int_{\Theta} \sum_{\omega\in\mathcal{G}} p_c(\vartheta,\omega) \frac{\exp[\bar Q(\omega,X;\vartheta)]}{c^{MF}(\vartheta,X)} \mathds{1}\left\{ \|\psi(g,X)-\psi_0\| \leq \delta \right\}  d\vartheta.
\end{equation*}
The distribution from Equation \eqref{posteriorleq} can be seen as a relaxed or smoothed version of the one in \eqref{posteriortwovar}. By introducing the $\delta$ parameter, the model allows for minor variations, thereby adding robustness. Notably, as $\delta$ approaches zero, Equation \eqref{posteriorleq} reverts to the stricter form of \eqref{posteriortwovar}.

For practical purposes, one can adjust $\delta$ adaptively within the algorithm to control the acceptance rate of proposed $(\theta,g)$ pairs. A persistently high acceptance rate would signal the need to reduce $\delta$ for greater precision, while a low rate would necessitate its increase to avoid undue stringency. Striving to maintain an optimal acceptance rate, around 5\%, guarantees both algorithmic efficiency and faithfulness to observed data.

This adaptive stance addresses two pivotal challenges in Bayesian computations: ensuring efficient sampling and accommodating the intricacies of real-world data. By striking this balance, the algorithm's utility is preserved both computationally and empirically.

\section{Monte Carlo Simulation Analysis}

In this analysis, I delve into the evaluation of Aggregate Relational Data for its efficacy in estimating network formation models. The experiment proceeds in a sequential three-stage process outlined as follows:

Initially, I simulate networks, leveraging a well-defined network formation model. It generates a controlled setting that mirrors the real-world network dynamics, yet under a regime where the true parameters are known. To ensure robustness and minimize the influence of randomness inherent to network formation, I sample multiple networks from their stationary distribution. 

Following the network simulation, the next phase involves the generation of Aggregate Relational Data  for each simulated network. This step translates the complex interconnections and interactions within the network into a simplified and aggregated form that retains essential relational information. 

Advancing to the estimation phase, I sample from the posterior distribution of the parameters, utilizing the Bayesian inference techniques outlined earlier. The estimation process culminates in the construction of credible sets for the parameters, derived from the ARD-based estimations. These sets are then compared against the true parameters used in the network simulations. Through this empirical  analysis, I aim to examine the potential of ARD as a tool for understanding and modeling complex networks.

\subsection{Design 1: No Social Interaction}\label{subsec:reducedcase}

The initial phase of my investigation delves into a simplified model of network formation, focusing exclusively on bilateral relationships. This model uses a utility function defined as:
\begin{equation*}
U_i(g,X,\varepsilon;\theta)=
\sum_{j\neq i} g_{ij} ( 5+ \theta|age_i-age_j| +\varepsilon_{ij} ),
\end{equation*}
where $\theta$ reflects the impact of age differences on the likelihood of forming a link. The model excludes considerations of reciprocity, indirect connections, or popularity effects. This approach allows for the validation of the fundamental principles of the method in a controlled setting before addressing more complex network dynamics.

For simulation purposes, I have set $\theta = -1$, indicating that the formation of relationships is negatively influenced by age differences. Then, I  compile Aggregate Relational Data for analysis, which forms the basis of my Monte Carlo Simulation.

The foundation of this study relies on a dataset from detailed surveys carried out in a village in Karnataka, southern India, in 2006. This dataset, analyzed in-depth by \cite{jackson2012social}, provides a detailed view of the village's social structure, highlighting the varied characteristics of individuals in the network. The data specifically comes from Village 47 in their broader dataset, offering detailed insights into individual characteristics for simulation purposes.

\newcounter{ARDcounter}
\setcounter{ARDcounter}{1}

The process of calibrating Aggregate Relational Data  involves analyzing responses to carefully designed questions asked of each individual $i$ in the network. This ARD captures key aspects of the network's relational dynamics, which are crucial for my analysis. The questions, listed below, are designed to convert important elements of the network structure into measurable data, recorded as $\psi_i$ for each participant:
\begin{enumerate}
  \item The total number of inbound links, $\psi_{i\arabic{ARDcounter}}$. \stepcounter{ARDcounter}
  \item The total number of outbound links, $\psi_{i\arabic{ARDcounter}}$. \stepcounter{ARDcounter}
  \item The count of inbound links from individuals within a 5-year age difference, $\psi_{i\arabic{ARDcounter}}$. \stepcounter{ARDcounter}
  \item The count of outbound links to individuals within a 5-year age difference, $\psi_{i\arabic{ARDcounter}}$. \stepcounter{ARDcounter}
  \item The number of inbound links from individuals aged less than 24 years, $\psi_{i\arabic{ARDcounter}}$. \stepcounter{ARDcounter}
  \item The number of outbound links to individuals aged less than 24 years, $\psi_{i\arabic{ARDcounter}}$. \stepcounter{ARDcounter}
  \item The count of inbound links from individuals aged between 25 to 44 years, $\psi_{i\arabic{ARDcounter}}$. \stepcounter{ARDcounter}
  \item The count of outbound links to individuals aged between 25 to 44 years, $\psi_{i\arabic{ARDcounter}}$. \stepcounter{ARDcounter}
  \item The number of inbound links from individuals aged more than 45 years, $\psi_{i\arabic{ARDcounter}}$. \stepcounter{ARDcounter}
  \item The number of outbound links to individuals aged more than 45 years, $\psi_{i\arabic{ARDcounter}}$. \stepcounter{ARDcounter}
\end{enumerate}
Structured as $\psi_i=[\psi_{i1},...,\psi_{i\arabic{ARDcounter}}]'$ for each individual and collectively as $\psi=[\psi_1',...,\psi_n']'$ for the entire network, the ARD provides a detailed representation of the network's overall relational attributes, setting the foundation for further analysis and estimation.

The simulation starts with an empty network and evolvs through 100,000 iterations. During these iterations, $\theta$ is adjusted across 30,000 iterations to explore the parameter space thoroughly. I adjust the tolerance level, $\delta$, twenty times to balance convergence speed, accuracy, and computational efficiency, ensuring it matches the expected network behavior closely. The evolution of $\theta$ is divided into 150 rounds, each of which contains 200 draws of $\theta$. The adjustment of $\delta$ follows the rule:
\begin{itemize}
    \item If the ratio of accepted draws is more than 50\%, $\delta$ is adjusted to 0.98 times its value.
    \item If the ratio of accepted draws is more than 18\% and less than 50\%, $\delta$ is adjusted to 0.99 times its value.
    \item If the ratio of accepted draws is more than 1\% and less than 18\%, $\delta$ is not adjusted in this round.
    \item If the ratio of accepted draws is more than 0.3\% and less than 1\%, $\delta$ is adjusted to 1.005 times its value.
    \item If the ratio of accepted draws is less than 0.3\%, $\delta$ is adjusted to 1.01 times its value.
\end{itemize}

The changes in $\theta$ during the simulation are shown in Figure \ref{simplesttrend}. This figure shows $\theta$ quickly stabilizes around the true value of $-1$, with fluctuations remaining within the $(-1.2, -1)$ range, tightly encompassing the true value with minimal variation.

\begin{figure}[htbp]
    \centering
    \includegraphics[width=0.8\linewidth]{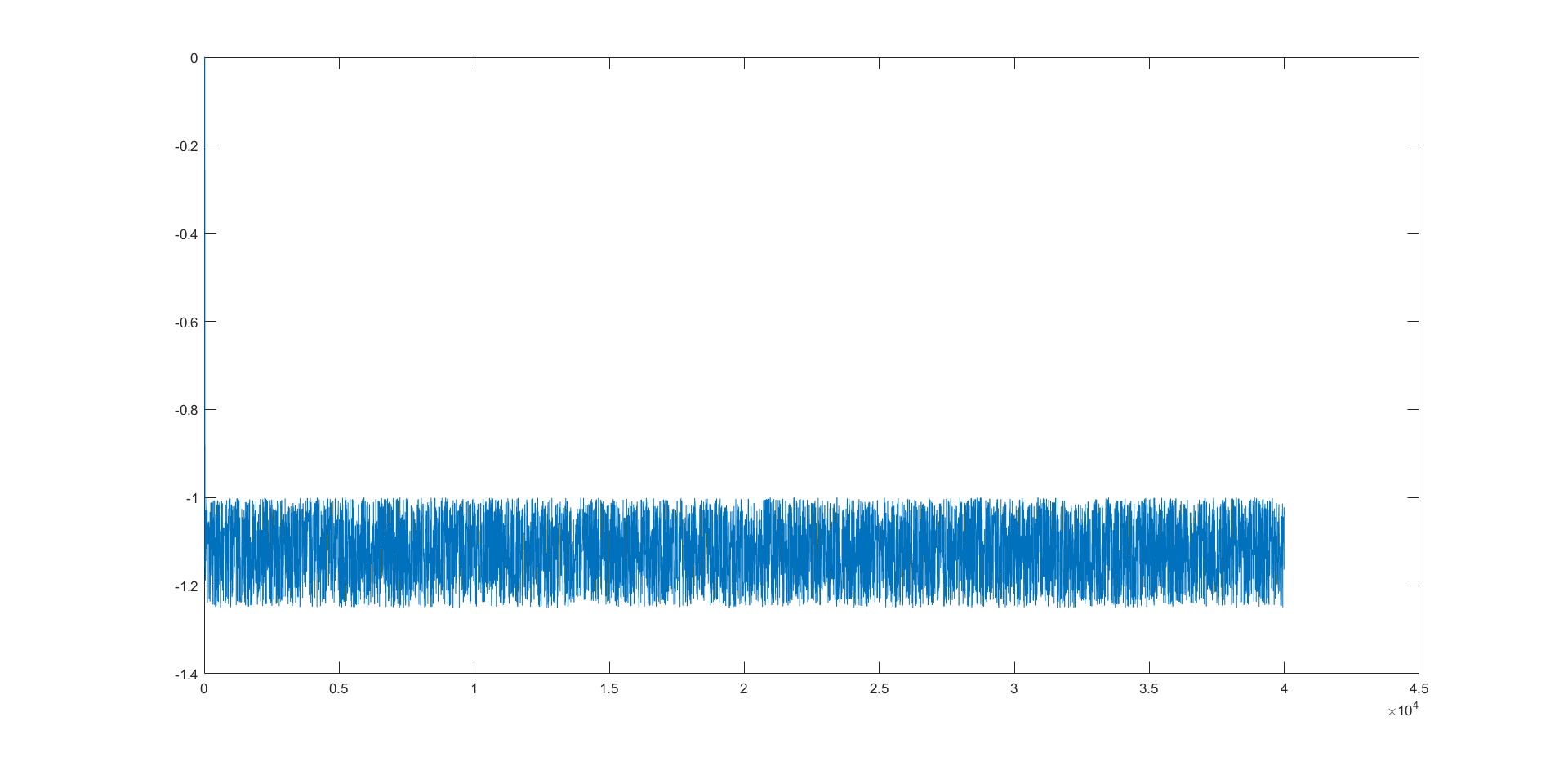}
\caption{Trend of $\theta$ without Social Interaction}
\label{simplesttrend}
\end{figure}

This slight deviation is due to the nature of ARD. By compressing network data into a set of measurable responses, ARD naturally results in some information loss. This effect is shown visually by expanding a precise point estimate into a small interval, indicating the balance between data simplification and accuracy.

The results of this simulation support the effectiveness of the proposed method for ARD-based scenarios, showing that $\theta$ converges to the posterior distribution when using ARD. This supports the convergence efficiency of Algorithm \ref{algo:ard}, confirming the value of this approach in applying ARD to network model estimations.

\subsection{Design 2: With Social Interaction}

In this part, I incorporate the concept of reciprocal utility to enrich the utility function. This enhancement enables my model to more accurately capture the social interactions within networks:
\[
U_i(g,X,\varepsilon;\theta)=
  \sum_{j\neq i} g_{ij} \left(\beta_0 +  \frac{\beta_1|age_i-age_j|}{20} + \varepsilon_{ij}\right)
    + \sum_{j\neq i} g_{ij}g_{ji} \gamma_1,
\]
where $\theta = (\beta_0, \beta_1, \gamma_1)'$ is selected to capture the complex dynamics of link formation, especially focusing on age differences and reciprocity in relationships. I normalize the age difference by dividing it by 20. This adjustment helps to ensure that the impact of age differences on the tendency to form links is represented on a practical and understandable scale.

In my Monte Carlo simulations, I use a setting that assumes minimal influence from mutual utility ($\gamma_1$), with $\beta_0$ set at 1. This configuration, represented by $\theta = (1, -1, 0.1)$, establishes a basis for examining the model's behavior under simplified assumptions about the influence of mutual utility.

I examine two distinct sets of ARDs. The first set comprises the following statistics:
\begin{enumerate}
  \item The total number of inbound links.
  \item The total number of outbound links.
  \item The count of inbound links from individuals within a 5-year age difference.
  \item The count of outbound links to individuals within a 5-year age difference.
  \item The count of inbound links from individuals with a 5-year to 10-year age difference.
  \item The count of outbound links to individuals with a 5-year to 10-year age difference.
  \item The count of inbound links from individuals with a more than 10-year age difference.
  \item The count of outbound links to individuals with a more than 10-year age difference.
\end{enumerate}

I use the same pattern to adapt $\delta$ as in Section \ref{subsec:reducedcase} with 30,000 iterations of $\theta$. 
After four repeated simulations, the average 90\% credible intervals for $\hat\beta_1$ and $\hat\gamma_1$ are $(-1.5099, 2.1415)$ and $(-1.4140, 2.2936)$ respectively, covering the true parameters but showing a wide range of possible parameter values. 
Figure \ref{fig:structrendsimpleard} shows a representative simulation with changes in $\theta$, where the rounds which has a extreme $\delta$ value and incurs unreasonable dynamics of $\theta$ are dropped. 
This range indicates the complexity that arises from allowing two parameters to vary simultaneously, which considerably affect the network's stationary distribution along with the randomness added by the $\varepsilon$ term. This situation highlights the complex trade-off between making the model easier to interpret through flexible parameters and the difficulty in dealing with the stochastic aspects of network dynamics.

\begin{figure}[htbp]
    \centering
    \includegraphics[width=0.8\linewidth]{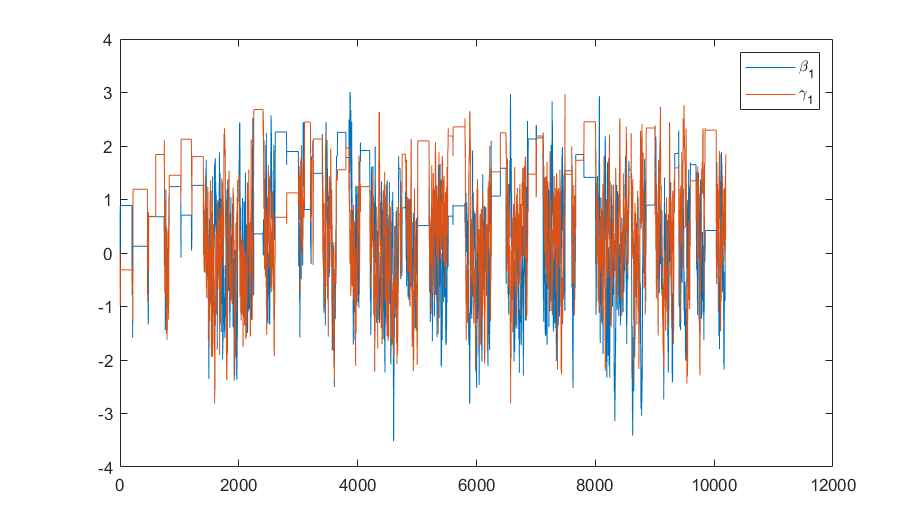}
\caption{Trend of $\theta$ with Social Interaction}
\label{fig:structrendsimpleard}
\end{figure}

Adding reciprocal link terms introduces more complexity to the model, affecting the network's stationary distribution based on $\theta$. Without link externalities, this distribution could be more simply approximated by a multinomial distribution. However, introducing such externalities complicates the model, necessitating more complex approximations in the variational approach. This change might make the results less aligned with the intended distribution, emphasizing the intricate challenges brought by broader structural elements of the model.

Continuing my investigation into the influence of ARD selection, I explore a second set of ARD statistics that includes questions on a finer division of age difference. This augmented set is comprised of the following:
\begin{enumerate}
  \item The total number of inbound links.
  \item The total number of outbound links.
  \item The count of inbound links from individuals within a 2-year age difference.
  \item The count of outbound links to individuals within a 2-year age difference.
  \item The count of inbound links from individuals with a 2-year to 5-year age difference.
  \item The count of outbound links to individuals with a 2-year to 5-year age difference.
  \item The count of inbound links from individuals with a 5-year to 8-year age difference.
  \item The count of outbound links to individuals with a 5-year to 8-year age difference.
  \item The count of inbound links from individuals with a 8-year to 12-year age difference.
  \item The count of outbound links to individuals with a 8-year to 12-year age difference.
  \item The count of inbound links from individuals with a 12-year to 17-year age difference.
  \item The count of outbound links to individuals with a 12-year to 17-year age difference.
  \item The count of inbound links from individuals with a 17-year to 24-year age difference.
  \item The count of outbound links to individuals with a 17-year to 24-year age difference.
  \item The count of inbound links from individuals with a more than 24-year age difference.
  \item The count of outbound links to individuals with a more than 24-year age difference.
\end{enumerate}

This augmented set of ARD, by finer queries on age differences, aims to enrich the model's input data. With repeated simulations, the average 90\% credible intervals obtained for $\hat\beta_1$ and $\hat\gamma_1$ are $(-1.3441, 2.1515)$ and $(-1.1273, 2.3580)$, respectively. The results are summarized in Table \ref{tab:summary_of_intervals}. 
Figure \ref{fig:structrendsimpleard2} shows a representative simulation with changes in $\theta$, excluding the rounds having an extreme $\delta$ value and unreasonable dynamics of $\theta$. 

\begin{table}
\centering
\begin{tabular}{@{}lcc@{}}
\toprule
Model & $\hat\beta_1$ 90\% CI & $\hat\gamma_1$ 90\% CI \\ \midrule
Benchmark ARD   & $(-1.5099, 2.1415)$ & $(-1.4140, 2.2936)$ \\
Augmented ARD   & $(-1.3441, 2.1515)$ & $(-1.1273, 2.3580)$ \\
\bottomrule
\end{tabular}
\caption{Summary of 90\% Credible Intervals for $\hat\beta_1$ and $\hat\gamma_1$}
\label{tab:summary_of_intervals}
\end{table}

\begin{figure}[htbp]
    \centering
    \includegraphics[width=0.8\linewidth]{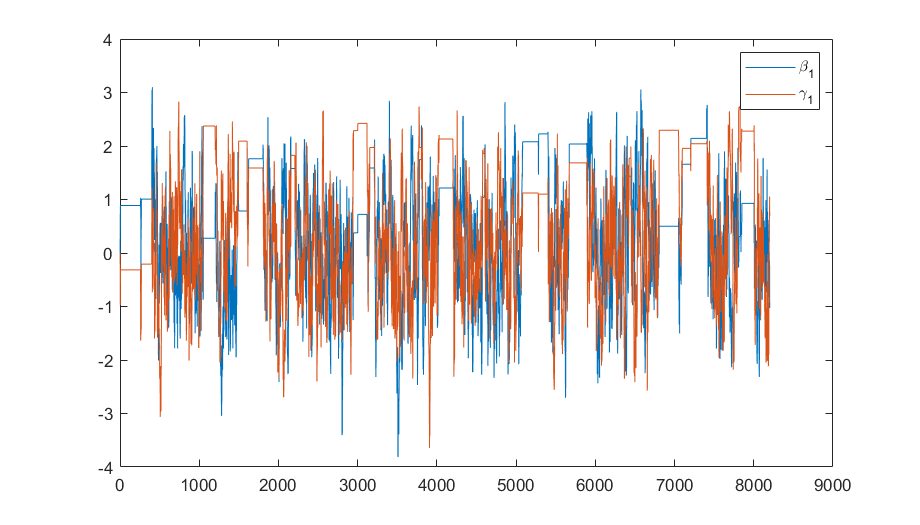}
\caption{Trend of $\theta$ in Structure Case with Augmented ARD}
\label{fig:structrendsimpleard2}
\end{figure}

These results suggest that augmenting the ARD with additional statistics can refine the parameter estimates to a certain extent. However, the efficacy of such refinements largely depends on the relevance and nature of the included ARD statistics, as previously discussed in Section \ref{sec:appendix:ard}. In scenarios where the ARD primarily captures age differences, introducing a finer division may not significantly alter the informational landscape. Consequently, the observed adjustments in the estimates of $\theta$ are modest, underscoring the impact of ARD composition on model estimations.

In summary, using Aggregate Relational Data  in my analysis yielded insightful yet complex results. Although the credible intervals consistently included the true parameter values, obtaining precise estimates proved difficult. These outcomes stem from several critical factors:

\textbf{Convergence and Model Complexity:} Integrating ARD can introduce convergence challenges in the simulation algorithm, reflecting a balance between data simplification and estimation accuracy. This situation highlights the necessity for further refinements in algorithms and adjustments in models to more effectively capture the complex dynamics of network structures.

\textbf{Computational Tools and Framework Limitations:} Utilizing the R package \texttt{netnew} for network simulations added complexity to my analysis. While this tool was crucial for conducting my research, its inherent convergence behavior and interaction with model parameters might require careful evaluation. This observation is not a critique but an acknowledgment of the ongoing need to develop and assess such computational tools.

Despite these challenges, the insights from my research reveal the complexities involved in using ARD to estimate complex network formation models. This sets the foundation for future research, including refining simulation algorithms, exploring alternative computational frameworks, and conducting thorough evaluations of model convergence to achieve more accurate estimations.

In aiming for a balanced discussion, I recognize the contributions of existing tools like the \texttt{netnew} package while pointing out areas for enhancement. This approach fosters a constructive dialogue aimed at advancing network analysis methodologies, promoting both methodological development and practical application.

\section{Conclusion}\label{section:conclusion}

In this study, I have extended beyond traditional methods in network data analysis to examine the use of Aggregate Relational Data in estimating network structures. This approach, enhanced by variational approximation methods, tackles the challenges posed by the intractability of the normalizing constant—a long-standing obstacle in complex network analysis. My methodology not only makes the computation process simpler but also offers a cost-effective solution for researchers with data collection constraints, thereby broadening the scope for network analysis in various domains.

My results highlight the utility of ARD in providing insightful approximations of network dynamics, even with limited or impractical access to detailed network data. By selecting and using ARD statistics effectively, I show that it is feasible to obtain meaningful insights into network formation and evolution. However, it is important to recognize the limitations of this approach, particularly its dependence on aggregate data which may not fully capture the complexity of individual interactions.

Looking forward, there are ample opportunities for advancing ARD-based methodologies. A promising direction is to improve the accuracy and dependability of parameter estimation through enhanced statistical methods and algorithmic improvements. This could lead to more detailed models that better reflect the subtleties of network data.

Additionally, exploring identified sets using ARD poses a fascinating challenge for future research. Comparing these sets with credible sets from Bayesian estimation could provide deeper understanding of ARD’s capabilities and limitations in network analysis. Such comparative studies could foster new methods that balance the theoretical ideal of complete network observation with the practical limits of data collection.

Moreover, applying ARD to a wider array of network datasets is worthwhile. Investigating various datasets, particularly those with different complexities and detail levels, might illuminate the generalizability and adaptability of ARD-based approaches. This exploration could identify key factors affecting ARD’s effectiveness in different network scenarios.

In conclusion, this study represents a significant advancement in using Aggregate Relational Data for network analysis, opening up numerous research possibilities. The ongoing effort to fully exploit ARD in deciphering complex social and economic networks continues. Future work should strive to expand what is feasible in network analysis and enhance our understanding of the interconnected world we live in.

\bibliographystyle{ecta}
\bibliography{citation.bib}

\begin{thebibliography}{7}
\newcommand{\enquote}[1]{``#1''}
\expandafter\ifx\csname natexlab\endcsname\relax\def\natexlab#1{#1}\fi

\bibitem[\protect\citeauthoryear{Banerjee, Breza, Duflo, and Kinnan}{Banerjee et~al.}{2017}]{banerjee2017credit}
\textsc{Banerjee, A.~V., E.~Breza, E.~Duflo, and C.~Kinnan} (2017): \enquote{Do credit constraints limit entrepreneurship? Heterogeneity in the returns to microfinance,} \emph{Heterogeneity in the Returns to Microfinance (September 1, 2017). Global Poverty Research Lab Working Paper}.

\bibitem[\protect\citeauthoryear{Breza, Chandrasekhar, McCormick, and Pan}{Breza et~al.}{2020}]{breza2020using}
\textsc{Breza, E., A.~G. Chandrasekhar, T.~H. McCormick, and M.~Pan} (2020): \enquote{Using aggregated relational data to feasibly identify network structure without network data,} \emph{American Economic Review}, 110, 2454--84.

\bibitem[\protect\citeauthoryear{Casella and Berger}{Casella and Berger}{2021}]{casella2021statistical}
\textsc{Casella, G. and R.~L. Berger} (2021): \emph{Statistical inference}, Cengage Learning.

\bibitem[\protect\citeauthoryear{Jackson, Rodriguez-Barraquer, and Tan}{Jackson et~al.}{2012}]{jackson2012social}
\textsc{Jackson, M.~O., T.~Rodriguez-Barraquer, and X.~Tan} (2012): \enquote{Social capital and social quilts: Network patterns of favor exchange,} \emph{American Economic Review}, 102, 1857--97.

\bibitem[\protect\citeauthoryear{Jordan, Ghahramani, Jaakkola, and Saul}{Jordan et~al.}{1999}]{jordan1999introduction}
\textsc{Jordan, M.~I., Z.~Ghahramani, T.~S. Jaakkola, and L.~K. Saul} (1999): \enquote{An introduction to variational methods for graphical models,} \emph{Machine learning}, 37, 183--233.

\bibitem[\protect\citeauthoryear{Mele}{Mele}{2017}]{mele2017structural}
\textsc{Mele, A.} (2017): \enquote{A structural model of dense network formation,} \emph{Econometrica}, 85, 825--850.

\bibitem[\protect\citeauthoryear{Mele and Zhu}{Mele and Zhu}{2023}]{melezhu2017approximate}
\textsc{Mele, A. and L.~Zhu} (2023): \enquote{Approximate variational estimation for a model of network formation,} \emph{Review of Economics and Statistics}, 105, 113--124.

\end{thebibliography}

\appendix

\section{Proof}\label{appendixproof}

\begin{proof}[Proof of Proposition \ref{prop:suffiexamp}]

\textbf{Step 1: Initial Deductions}  
From the answer of A to Question 3, we can determine the value of $g_{AB}$. Similarly, by examining the answer of B to Question 3, the value of $g_{BA}$ becomes apparent. Now, by comparing the responses of C to Questions 1 and 3, we can derive the value of $g_{CD}$. Following the same methodology, by contrasting D's answers to these two questions, we discern the value of $g_{DC}$. 

\textbf{Step 2: Deciphering Links}  
We delve deeper by evaluating the situations of A's links to C and D:
\begin{itemize}
    \item \textbf{A's Both or No Links:} Analyzing the response of A to Question 1, if he indicates either no links or links to both C and D, that is, $g_{AC}=0, g_{AD}=0$ or $g_{AC}=1, g_{AD}=1$, then C's response to Question 2 unveils $g_{BC}$, and similarly, D's response to the same question gives us $g_{BD}$. 
    Here, the scenario bifurcates by the situations of C's links to A and B:
      \begin{itemize}
          \item \textbf{C's Both or No Links:} If C's response to Question 1 indicates either both or no links to A and B, that is, $g_{CA}=0, g_{CB}=0$ or $g_{CA}=1, g_{CB}=1$, then A's response to Question 2 lets us deduce $g_{DA}$, and B's response to the same question unveils $g_{DB}$. At this point, the network structure can be wholly mapped using ARD.
          \item \textbf{C's Single Link:} If C's response indicates a single link to either A or B, represented as $g_{CA}+g_{CB}=1$, we then entertain two possible scenarios. 
          \begin{itemize}
              \item In the first scenario, we assume $g_{CA}=1$ and consequently $g_{CB}=0$. By leveraging the subsequent answers from A and B to Question 2, we can determine the remaining network links $g_{DA}$ and $g_{DB}$. 
              \item In the alternate scenario where $g_{CB}=1$ and $g_{CA}=0$, we can similarly discern the rest of the network links from the answers of A and B to Question 2. 
          \end{itemize}
           Regardless of which assumption is made, two potential network structures emerge from the ARD. Turning our attention back to the utility function, it is evident that both $g_{AC}$ and $g_{AD}$ yield identical utility to A but not to others; similarly, $g_{CA}$ and $g_{CB}$ offer equal utility to C but none to others. Given this, we can conclude that both potential networks — distinguished only by the links between pairs A, B and C, D — are equally probable in a stationary distribution. This leads to an observational probability of 50\% for either network based on the ARD.
      \end{itemize}
    \item \textbf{A's Singular Link:} If A's response indicates a link to either C or D, that is, $g_{AC}+g_{AD}=1$, irrespective of the exact configuration of $g_{AC}=1$, $g_{AD}=0$ or $g_{AC}=0$, $g_{AD}=1$, we proceed with the established analytical framework by the situations of C's links to A and B:
      \begin{itemize}
          \item \textbf{C's Both or No Links:} If C indicates having links to both or none between A and B, an analogous analysis as above considering the utility function presents two feasible networks, each holding an observational probability of 50\%.
          \item \textbf{C's Singular Link:} In case C hints at having just one link between A and B, that is, $g_{CA}+g_{CB}=1$, similar analysis reveals that there are potentially four network configurations. These networks mainly differ based on links between pairs A, B and C, D. Relying upon the utility function, where no individual differentiates between links to any member of the opposite pair, the probability of observing any one of the four configurations stands at 25\%.
      \end{itemize}
\end{itemize}

\textbf{Step 3: Final Observations}  
In light of the above, we recognize that $\Pr[g|\psi(g,X);\theta]$ remains invariant with respect to $\theta$ across all cases. This brings us to the deduction that the ARD statistics hold sufficiency for $\theta$.
\end{proof}

\begin{proof}[Proof of Theorem \ref{sufficientaux}]
When $\psi$ is sufficient for $\theta$, $\Pr[g|\psi(g,X);\theta]$ is not revelent of $\theta$, then
the posterior distribution of observing $\psi(g_0,X)$ is
\begin{eqnarray*}
  p(\theta|\psi(g_0,X),X) &=&
  \frac{p_0(\theta) L(\psi(g_0,X);\theta) }    { \int_{\Theta} p_0(\vartheta) L(\psi(g_0,X);\vartheta) d\vartheta }  \\
  &=& \frac{p_0(\theta) P[g_0|\psi(g_0,X)] L(g_0;\theta) }    { \int_{\Theta} p_0(\vartheta) P[g_0|\psi(g_0,X)] L(g_0;\vartheta) d\vartheta } \\
   &=& p(\theta|g_0,X)
\end{eqnarray*}
which coincides with the posterior distribution of observing $g_0$. 
\end{proof}

\begin{proof}[Proof of Theorem \ref{convergeofard}]
Algorithm \ref{algo:ard} follows standard Metropolis-Hasting framework. We directly verify the detailed balanced condition with the posterior distribution in Equation \eqref{posteriortwovar}.
\begin{eqnarray*}
    & & p(\theta,g|\psi_0,X) \Pr[\theta',g'|\theta,g,\psi_0,X]  \\
    &=& p(\theta,g|\psi_0,X) q_\theta(\theta'|\theta) \pi(g'|X;\theta) \alpha(\theta,g,\theta',g'|\psi_0,X) \\
    &=& \frac{p_c(\theta,g)  \mathds{1}\left\{ \psi(g,X)=\psi_0 \right\} \exp[Q(g,X;\theta)] / c(X;\theta) }{ \mathcal{Z}(\psi_0,X)} q_\theta(\theta'|\theta)  \exp[Q(g',X;\theta)] / c(X;\theta) \cdot   \\
    & &
    \min\left\{ 1,
  \frac{p_c(\theta',g')}{p_c(\theta,g)}
  \frac{\exp[Q(g',X;\theta')] / c(X;\theta')}  {\exp[Q(g,X;\theta)] / c(X;\theta)}
  \frac{\mathds{1}\left\{ \psi(g',X)=\psi_0 \right\}}{\mathds{1}\left\{ \psi(g,X)=\psi_0 \right\}}
  \frac{q_{\theta}(\theta|\theta')}{q_{\theta}(\theta'|\theta)}
  \frac{\exp[Q(g,X;\theta')] / c(X;\theta')}  {\exp[Q(g',X;\theta)] / c(X;\theta)}
  \right\}    \\
   &=& 
   \frac{1 }{ \mathcal{Z}(\psi_0,X)} \cdot  \\
   & &   \min\left\{ 
   p_c(\theta,g) q_\theta(\theta'|\theta)  \mathds{1}\left\{ \psi(g,X)=\psi_0 \right\} \exp[Q(g,X;\theta)] / c(X;\theta)  \exp[Q(g',X;\theta)] / c(X;\theta), 
   \right.        \\
   & &  \left.
   p_c(\theta',g')  q_{\theta}(\theta|\theta') \mathds{1}\left\{ \psi(g',X)=\psi_0 \right\}  \exp[Q(g',X;\theta')] / c(X;\theta')   \exp[Q(g,X;\theta')] / c(X;\theta')
   \right\} \\
   &=& \min\left\{
 \frac{p_c(\theta,g)}{p_c(\theta',g')}
\frac{q_\theta(\theta'|\theta)}{q_{\theta}(\theta|\theta')}
\frac{\mathds{1}\left\{ \psi(g,X)=\psi_0 \right\}}{\mathds{1}\left\{ \psi(g',X)=\psi_0 \right\}}
\frac{\exp[Q(g,X;\theta)] / c(X;\theta)}{\exp[Q(g',X;\theta')] / c(X;\theta')}
\frac{\exp[Q(g',X;\theta)] / c(X;\theta)}{\exp[Q(g,X;\theta')] / c(X;\theta')},
  1   \right\} \cdot  \\
  & &   q_\theta(\theta|\theta')  \exp[Q(g,X;\theta')] / c(X;\theta') \cdot
  \frac{p_c(\theta',g')  \mathds{1}\left\{ \psi(g',X)=\psi_0 \right\} \exp[Q(g',X;\theta')] / c(X;\theta') }{ \mathcal{Z}(\psi_0,X)}   \\
  &=&  q_\theta(\theta|\theta') \pi(g|X;\theta')   \alpha(\theta',g',\theta,g|\psi_0,X)
  p(\theta',g'|\psi_0,X)  \\
  &=& \Pr[\theta,g|\theta',g',\psi_0,X] p(\theta',g'|\psi_0,X) 
\end{eqnarray*}

This indicates that the detailed balance condition is satisfied. Therefore, according to the nature of Metropolis-Hasting, Algorithm \ref{algo:ard} will produce a sequence of $(\theta,g)$ that converges to the posterior distribution in Equation \eqref{posteriortwovar}. 
\end{proof}

\end{document}